\renewcommand*\nompreamble{\begin{multicols}{2}}
\renewcommand*\nompostamble{\end{multicols}}
\renewcommand\nomgroup[1]
\newcommand{\RM}[1]{\mathrm{#1}}
\def\d{\mathrm{d}}
\def\dw{\d \omega}
\def\ejw{\RM{e}^{\RM{j}\omega}}
\newtheorem{thm}{Theorem}
\newtheorem{cor}[thm]{Corollary}
\newtheorem{assum}[thm]{Assumption}
\newtheorem{prop}[thm]{Proposition}
 \theoremstyle{definition}
\newtheorem{rem}[thm]{Remark}
\begin{document}

\begin{frontmatter}

\title{On the optimality of Kalman Filter for Fault Detection}


\author[zju]{Jinming Zhou}
\ead{zhoujinming@zju.edu.cn}

\author[zju]{Yucai Zhu\corref{mycorrespondingauthor}}
\cortext[mycorrespondingauthor]{Corresponding author}
\ead{zhuyucai@zju.edu.cn}

\address[zju]{State Key Laboratory of Industrial Control Technology, College of Control Science and Engineering, Zhejiang University, Hangzhou 310027, China}

\begin{abstract}                
    Kalman filter is widely used for residual generation in fault detection. It leads to optimality in fault detection using some performance indices and also leads to statistically sound residual evaluation and threshold setting. This paper shows that these nice features do not necessarily imply an optimal fault detection performance. Based on a performance index related to fault detection rate and false alarm rate, several occasions where Kalman filter should not be used are pointed out; further the residual evaluation and threshold setting are discussed, in which it is pointed out that in stochastic setting an optimal statistical test of Kamlan filter is not related to optimality of commonly used detection performance indicators. The theoretical analysis is verified through Monte Carlo simulations and Tennessee Eastman process (TEP) dataset.
 \end{abstract}
 
 \begin{keyword}
 Fault detection, Kalman filter, Residual design, Tennessee Eastman process
 \end{keyword}

\end{frontmatter}


\section{Introduction}
\label{sec:intro}
Fault detection, also known as change detection, is a very important and well researched topic in many engineering fields e.g. process industry, mechanical systems and biomedicine \citep{basseville1993detection}. Its goal is to detect the abnormal events in system and related signals. Among numerous branches of this field, model-based method is a natural and classical one, of which the fundamental is a mathematical model describing the relationships between signals. The essentials of model-based methods contain three parts: 1) residual generation; 2) residual evaluation and 3) threshold setting \citep{ding2008model}. 

There are two different settings when considering residual generation: deterministic \citep{ding2008model,chen2012robust} and stochastic \citep{basseville1993detection,gustafsson2000adaptive}. In the stochastic setting, there are numerous papers choosing Kalman filter (KF) as the basic residual, see  \cite{simani2008fault,wei2010sensor,dong2012robust,zhang2018adaptive}. The choice is mainly due to KF provides an optimal state estimation, and the white innovation often leads a statistically sound threshold setting. Further, its natural link to parameter estimation can be utilized to design adaptive methods for time-varying system \cite{xu2004residual,zhang2018adaptive}. In frequency domain, the disturbance in a KF-based residual has a flat spectrum, which is same as the unified solution derived in the deterministic setup \cite{ding2000unified}. The unified solution is optimal concerning some system norm based indices and it solves the optimal integrated design problem \cite{ding2008model}. The simplicity of KF-based residual and the unified solution seemingly suggests KF would be a good choice for FD.

The three most commonly used indicators of FD performance are  fault detection rate (FDR), false alarm rate (FAR) and mean time to detection (MT2D). The question is that, does KF-based residual really leads optimality in these indicators? This paper explores the question in a stochastic setting. The discussions focus on two aspect: performance comparison of KF-based residual to others using an index related to FDR and FAR; statistical test issue (including residual evaluation and threshold setting). Utilizing the equivalence between KF-based residual and prediction error, and the optimal filter framework developed in \cite{zhou2021identification}, several occasions where Kalman filter is not optimal will be pointed out. In this procedure, some additional implementation details of the optimal filter scheme are also given.

The paper unfolds as follows: Section~\ref{sec:pre} gives the necessary preliminaries and states the equivalence between KF-based residual and prediction error. The main results are contained in Section~\ref{sec:main}. Section~\ref{sec:sim} uses Monte Carlo simulations to validate the analysis. Section~\ref{sec:TEP} reports the validation in TEP. Section~\ref{sec:con} is the conclusion.

\section{Preliminaries}
\label{sec:pre}
The paper considers a stochastic linear time-invariant (LTI) system, and its state-space form is
\begin{subequations}
   \begin{align}
      x(t+1) &= Ax(t)+Bu(t)+w(t), \\
   y(t)   &= Cx(t)+\nu (t) 
   \end{align}
   \label{eq:sys_ss}
\end{subequations} 
where $w(t)$ and $\nu (t)$ are assumed to be Gaussian white noise. Denote $\mathbb{E}w(t)w^{\top}(t):=\varSigma_{w}$,  $\mathbb{E}\nu(t)\nu^{\top}(t):=\varSigma_{\nu}$ and $\mathbb{E}w(t)\nu^{\top}(t):=\varSigma_{w\nu}$. A KF-based residual takes the form:
\begin{subequations}
   \begin{align}
      \hat{x}(t+1)&=A\hat{x}(t)+Bu(t)+K\left[y(t)-C\hat{x}(t)\right], \\
      \varepsilon (t)&=y(t)-C\hat{x}(t)
   \end{align}
   \label{eq:kf_res}
\end{subequations}
where $K$ denotes the Kalman gain, which can be obtained by solving the algebraic Riccati equation.
Denote the covariance of $\varepsilon(t)$ as $\mathbb{E}\varepsilon(t)\varepsilon^{\top}(t):=\varLambda$.
Alternatively, the following transfer-function model can be considered:
\begin{subequations}
   \begin{align}
      y(t)&=G(q)u(t)+H(q)\varepsilon(t) \\
      G(q)&=C(qI-A)^{-1}B \\
      H(q)&=C(qI-A)^{-1}K+I 
   \end{align}
   \label{eq:sys_tf}
\end{subequations}
where $q$ denotes the forward time shifter. Based on the discussion in Chapter~4.3 in \cite{ljung1999system} and with some algebraic operations, the following proposition holds.
\begin{prop}
   System~(\ref{eq:sys_ss}) and System~(\ref{eq:sys_tf}) are equivalent, further $\varepsilon(t)$ in (\ref{eq:kf_res}) can alternatively be written as:
   \begin{equation}
      \varepsilon(t)=H^{-1}(q)\left[y(t)-G(q)u(t)\right].
      \label{eq:pe}
   \end{equation}
\end{prop}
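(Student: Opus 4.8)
The plan is to split the argument into two parts: first establishing the input--output equivalence of the two system descriptions, and then deriving the filtered form (\ref{eq:pe}) by eliminating the state from the Kalman filter recursion.

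For the equivalence, I would appeal to the steady-state Kalman filter / innovations-representation theory referenced in Chapter~4.3 of \cite{ljung1999system}. The stabilizing solution of the algebraic Riccati equation produces a gain $K$ for which $A-KC$ is stable and for which the innovations model
\begin{align*}
\hat{x}(t+1) &= A\hat{x}(t) + Bu(t) + K\varepsilon(t), \\
y(t) &= C\hat{x}(t) + \varepsilon(t),
\end{align*}
with $\varepsilon(t)$ white of covariance $\varLambda$, reproduces the second-order statistics of the output of System~(\ref{eq:sys_ss}); this is the spectral-factorization property of the Riccati equation. Solving the state equation for $\hat{x}$ and substituting into the output equation yields $y(t)=C(qI-A)^{-1}Bu(t)+[C(qI-A)^{-1}K+I]\varepsilon(t)$, which is precisely System~(\ref{eq:sys_tf}); hence the two descriptions are equivalent in the sense of generating the same output process.

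For (\ref{eq:pe}), I would start from (\ref{eq:kf_res}): substituting $\varepsilon(t)=y(t)-C\hat{x}(t)$ into the state update gives $\hat{x}(t+1)=(A-KC)\hat{x}(t)+Bu(t)+Ky(t)$, so that $\hat{x}(t)=(qI-A+KC)^{-1}[Bu(t)+Ky(t)]$, the inverse existing as a stable causal operator because $A-KC$ is stable, and therefore
\begin{equation*}
\varepsilon(t)=\left[I-C(qI-A+KC)^{-1}K\right]y(t)-C(qI-A+KC)^{-1}Bu(t).
\end{equation*}
It then suffices to verify the two identities $H^{-1}(q)=I-C(qI-A+KC)^{-1}K$ and $H^{-1}(q)G(q)=C(qI-A+KC)^{-1}B$. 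The first is the matrix-inversion (push-through) lemma applied to $H(q)=I+C(qI-A)^{-1}K$; the second follows from the first together with the resolvent identity $KC(qI-A)^{-1}=(qI-A+KC)(qI-A)^{-1}-I$. Combining the two gives $\varepsilon(t)=H^{-1}(q)y(t)-H^{-1}(q)G(q)u(t)=H^{-1}(q)[y(t)-G(q)u(t)]$.

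The routine matrix algebra is not the real obstacle; the step that carries the weight is making ``equivalent'' precise and justified, i.e.\ invoking the stabilizing Riccati solution and its spectral-factorization property. That same fact also supplies the stability of $A-KC$, which is what guarantees that $H^{-1}(q)$ in (\ref{eq:pe}) is a well-defined causal and stable filter rather than a merely formal inverse.
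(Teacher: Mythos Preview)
Your proposal is correct and aligns with the paper's approach: the paper does not give a detailed proof but simply cites Chapter~4.3 of \cite{ljung1999system} and appeals to ``some algebraic operations,'' which is exactly the innovations-representation argument and the state-elimination/matrix-inversion-lemma computation you have written out. In effect you have filled in the details the paper leaves implicit, and your observation that the stabilizing Riccati solution (hence stability of $A-KC$) is what makes $H^{-1}(q)$ a bona fide causal stable filter is the right point to flag.
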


This proposition implies the KF-based residual is equivalent to (\ref{eq:pe}), also known as prediction error (PE) in system identification. The transfer-function model (\ref{eq:sys_tf}) and equation~(\ref{eq:pe}) will be used in subsequent analysis. In practice, fault can happen at input, output or state signals. However it is more convenient to work with a final term at output,
\begin{equation}
   y(t)=G(q)u(t)+v(t)+f(t)
\end{equation}
where $v(t):=H(q)e(t)$. As a competitor of PE, the output error (OE) that is equivalent to the residual delivered by a zero-gain observer can be written as
\begin{equation}
   \zeta(t)=y(t)-G(q)u(t).
   \label{eq:oe}
\end{equation}

For simplicity only single-input single-output (SISO) system without model error is considered. The results can however be extended to multi-input multi-output (MIMO) system with model error, by decomposing the system to multi-input single-output (MISO) subsystems and using a generalized disturbance term $\bar{v}$ composed of the total effects of noise and model error. See \cite{zhou2021identification} for details. 

\section{The optimality of KF-based residual}
\label{sec:main}
In this section, the optimality of KF-based residual will first be investigated through a performance index then in the residual evaluation and threshold setting issues. Several situations that KF-based residual does not outperform others will be pointed out.
\subsection{Residual comparison using a performance index}
Consider a general residual  
\begin{equation}
   r(t) = \bar{f}(t) + \bar{d}(t)
\end{equation}
where $\bar{f}$ and $\bar{d}$ denote the final effects of faults and disturbances in the residual. Take PE for instance, $\bar{f}(t)=H^{-1}(q)f(t)$, $\bar{d}(t)=\varepsilon(t)$. Introduce the following performance index:

\begin{subequations}
   \begin{align}
      & \mathcal{J}:={\mathbb{E}\{\left[{r}^{\RM{f}}(t)\right]^2\}}/{\mathbb{E}\{\left[{r}^{\RM{n}}(t)\right]^2\}} \\
      &=\frac{\lim_{N_{\RM{f}}\to\infty}\sum_{t=1}^{N_{\RM{f}}}\left[{r}^{\RM{f}}(t)\right]^2}{\lim_{N_{\RM{n}}\to\infty}\sum_{t=1}^{N_{\RM{n}}}\left[{r}^{\RM{n}}(t)\right]^2}
      =\frac{\Vert {r}^{\RM{f}}(t)\Vert^2_{\RM{pow}}}{\Vert {r}^{\RM{n}}(t)\Vert^2_{\RM{pow}}} \label{eq:J_time}\\ 
      &=\frac{\int_{-\pi}^{\pi}\varPhi_{\bar{f}}(\omega)\dw}{\int_{-\pi}^{\pi}\varPhi_{\bar{d}}(\omega)\dw}+1 =\frac{\Vert R_{\bar{f}}(\ejw) \Vert_2^2}{\Vert R_{\bar{d}}(\ejw)\Vert_2^2}+1\label{eq:J_freq}
   \end{align}
   \label{eq:J}
\end{subequations}

The subscripts f and n denote faulty and normal respectively. $\varPhi_{\cdot}$ denotes power spectrum, $\Vert\cdot\Vert_{\RM{pow}}$ denotes power norm, also called root mean square value,  $\Vert\cdot\Vert_{2}$ denotes $\mathcal{H}_2$ norm of a transfer function. $R_{\bar{f}}(\ejw)$ and $R_{\bar{d}}(\ejw)$ are spectral factor of the corresponding spectra. See Chapter 2.3 of \cite{ljung1999system} for detailed discussions about power spectrum and spectral factor. Notice that the first equal sign in (\ref{eq:J_freq}) is due to Parseval's relation.

A deterministic version of this index was first proposed in \cite{simani2008fault}, while in \cite{zhou2021identification} the authors extended it to stochastic setting. 
The index $\mathcal{J}$ measures the fault-to-noise ratio (FNR) from $\bar{f}$ to $\bar{d}$. 
A larger $\mathcal{J}$ means a higher detection performance. It is also closely related to the FDR and FAR, whose expressions will be given later (c.f. (\ref{eq:index})). For a given threshold, a larger power of faulty residual means a higher FDR while a smaller power of normal residual means a lower FAR. It should be emphasized that the information of fault is naturally embedded in $\mathcal{J}$.

Consider a specific case that is often encountered in practice, which is formalized below:
\begin{assum}
   The disturbance signal $v(t)$, the fault signal $f(t)$ and the generalized disturbance are all low-pass, moreover $\varPhi_{f}$ decays faster than $\varPhi_{v}$. 
\end{assum}

The power spectrum of a low-pass signal mentioned above mainly concentrates at the low frequencies and decays as frequency increases. The spectrum of $f(t)$ decays faster such that it is `more low-pass'. This assumption encompasses typical fault types like step, drift and stair signals. \citep{zhang1994early,isermann2011fault}.

\begin{thm}
   Under Assumption 1, the performance index in (\ref{eq:J}) of the output error (OE) residual is larger than the one of prediction error (PE) residual.
 \end{thm}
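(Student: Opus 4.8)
The plan is to evaluate the index $\mathcal{J}$ of (\ref{eq:J}) in closed form for the two residuals and reduce the claim to a single scalar inequality between frequency integrals of $\varPhi_f$ and $\varPhi_v$. First I would identify the decomposition $r=\bar f+\bar d$ in each case. For the OE residual, (\ref{eq:oe}) gives $\zeta(t)=v(t)+f(t)$, so $\bar f=f$, $\bar d=v$, whence $\varPhi_{\bar f}=\varPhi_f$ and $\varPhi_{\bar d}=\varPhi_v$. For the PE residual, the preceding Proposition together with $v(t)=H(q)e(t)$ (SISO, no model error) gives $\varepsilon(t)=H^{-1}(q)[v(t)+f(t)]=e(t)+H^{-1}(q)f(t)$, so $\bar f=H^{-1}(q)f$, $\bar d=e$, whence $\varPhi_{\bar f}=|H(\ejw)|^{-2}\varPhi_f$ and $\varPhi_{\bar d}=\varPhi_e$ is flat. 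Substituting into (\ref{eq:J_freq}) and using the spectral factorization identity $\varPhi_v=|H(\ejw)|^2\varPhi_e$ to replace $|H(\ejw)|^{-2}$ by $\varPhi_e/\varPhi_v$, the constant $\varPhi_e$ cancels in the PE ratio and one obtains
\[
   \mathcal{J}_{\RM{OE}}-1=\frac{\intpi\varPhi_f(\omega)\dw}{\intpi\varPhi_v(\omega)\dw},
   \qquad
   \mathcal{J}_{\RM{PE}}-1=\frac{1}{2\pi}\intpi\frac{\varPhi_f(\omega)}{\varPhi_v(\omega)}\dw .
\]
Hence the theorem is equivalent to
\[
   \left(\frac{1}{2\pi}\intpi\varPhi_f\dw\right)\Big/\left(\frac{1}{2\pi}\intpi\varPhi_v\dw\right)
   \;>\;\frac{1}{2\pi}\intpi\frac{\varPhi_f}{\varPhi_v}\dw .
\]

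The second step is to translate Assumption 1 into two monotonicity statements on $[0,\pi]$ (using that both spectra are even in $\omega$): $\varPhi_v(\omega)$ is non-increasing in $\omega$ ($v$ low-pass) and $\varPhi_f(\omega)/\varPhi_v(\omega)$ is non-increasing in $\omega$ ($\varPhi_f$ decays faster than $\varPhi_v$). Writing $\langle h\rangle:=\frac{1}{2\pi}\intpi h(\omega)\dw$ for the average against the uniform probability measure on $[-\pi,\pi]$, the displayed inequality is $\langle\varPhi_f\rangle/\langle\varPhi_v\rangle>\langle\varPhi_f/\varPhi_v\rangle$, i.e. $\langle\varPhi_v\cdot(\varPhi_f/\varPhi_v)\rangle>\langle\varPhi_v\rangle\,\langle\varPhi_f/\varPhi_v\rangle$. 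This is exactly Chebyshev's sum (correlation) inequality for the two similarly ordered functions $\varPhi_v$ and $\varPhi_f/\varPhi_v$, both non-increasing in $|\omega|$: the average of the product dominates the product of the averages. Concretely one may expand the integral $\iint_{[-\pi,\pi]^2}\big(\varPhi_v(\omega_1)-\varPhi_v(\omega_2)\big)\big(\tfrac{\varPhi_f}{\varPhi_v}(\omega_1)-\tfrac{\varPhi_f}{\varPhi_v}(\omega_2)\big)\d\omega_1\,\d\omega_2\ge 0$, whose integrand is nonnegative since both factors have the same sign (that of $|\omega_2|-|\omega_1|$). Strictness follows because under Assumption 1 $\varPhi_v$ is genuinely (strictly) decreasing on a set of positive measure while $\varPhi_f/\varPhi_v$ is not a.e.\ constant, so the double integral is strictly positive. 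Combined with the identities above this gives $\mathcal{J}_{\RM{OE}}>\mathcal{J}_{\RM{PE}}$.

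The main obstacle is not the final inequality — once its hypotheses are in place, Chebyshev's inequality closes the proof in one line — but the preceding translation: giving ``low-pass'' and ``$\varPhi_f$ decays faster than $\varPhi_v$'' precise meanings that deliver exactly the required (strict) monotonicity. A minor point worth recording explicitly is the uncorrelatedness of $\bar f$ and $\bar d$ underlying the ``$+1$'' form of (\ref{eq:J_freq}), which holds when $f$ is a deterministic fault or is independent of the innovations $e$. It may also be worth remarking that the same computation quantifies the gap: the advantage of OE over PE grows as $\varPhi_f$ becomes more sharply concentrated at low frequency relative to $\varPhi_v$.
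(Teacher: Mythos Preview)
The paper does not actually supply a proof of this theorem; it simply cites \cite{zhou2021identification}. So there is no in-paper argument to compare against, and your proposal must be judged on its own. On that score it is correct: the reduction to
\[
   \Big\langle\varPhi_v\cdot\tfrac{\varPhi_f}{\varPhi_v}\Big\rangle
   \;>\;\langle\varPhi_v\rangle\,\Big\langle\tfrac{\varPhi_f}{\varPhi_v}\Big\rangle
\]
via the spectral factorization $|H(\ejw)|^{-2}=\varPhi_e/\varPhi_v$ is clean, and the double-integral form of Chebyshev's correlation inequality handles the even-in-$\omega$ issue neatly (the factors share the sign of $|\omega_2|-|\omega_1|$). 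Your reading of Assumption~1 --- $\varPhi_v$ non-increasing in $|\omega|$ and $\varPhi_f/\varPhi_v$ non-increasing in $|\omega|$ --- is the natural one and is consistent with how the paper itself reasons in the proof of Proposition~4, where the heuristic ``after being filtered by the high-pass filter $H^{-1}(q)$ its relative energy loss is larger'' is exactly the Chebyshev step you formalize. The caveats you flag (precise meaning of ``decays faster'', strictness, and the independence of $\bar f$ and $\bar d$ behind the $+1$ in (\ref{eq:J_freq})) are genuine loose ends in the statement rather than gaps in your argument.
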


 The proof of this theorem can be found in \cite{zhou2021identification}. Based on Proposition~1 and this theorem, the following corollary is straightforward:
 \begin{cor}
   Under Assumption 1, the KF-based residual (\ref{eq:kf_res}) has a smaller performance index than that of the output error residual (\ref{eq:oe}).
 \end{cor}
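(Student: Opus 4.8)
The plan is to obtain the corollary directly from Proposition~1 and Theorem~1, so that the only real work is to transport the conclusion of Theorem~1 from the prediction-error residual to the Kalman-filter residual~(\ref{eq:kf_res}). No new estimate is needed.

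First I would invoke Proposition~1 to rewrite $\varepsilon(t)$ in~(\ref{eq:kf_res}) as the prediction error $H^{-1}(q)\left[y(t)-G(q)u(t)\right]$. Substituting the faulty output model $y(t)=G(q)u(t)+v(t)+f(t)$ with $v(t)=H(q)e(t)$ then yields a residual of the form $\bar d(t)+H^{-1}(q)f(t)$, where $\bar d(t)$ is the white innovation. Thus, in the notation $r(t)=\bar f(t)+\bar d(t)$ of Section~3.1, the Kalman-filter residual carries exactly the same fault contribution $\bar f(t)=H^{-1}(q)f(t)$ and the same disturbance contribution $\bar d(t)$ as the PE residual. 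Consequently the spectra $\varPhi_{\bar f}$ and $\varPhi_{\bar d}$, and hence the index $\mathcal{J}$ defined in~(\ref{eq:J}), are identical for the Kalman-filter residual and for the PE residual.

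Second, with this identification in hand, Theorem~1 applies verbatim: under Assumption~1 the output-error residual~(\ref{eq:oe}) satisfies $\mathcal{J}_{\mathrm{OE}}>\mathcal{J}_{\mathrm{PE}}$, and since $\mathcal{J}_{\mathrm{PE}}=\mathcal{J}_{\mathrm{KF}}$ by the previous step, we conclude $\mathcal{J}_{\mathrm{KF}}<\mathcal{J}_{\mathrm{OE}}$, which is precisely the claim. Note that Assumption~1 enters only through Theorem~1 (where it is used to compare the $\mathcal{H}_2$ norms of the shaped fault and disturbance spectral factors), so it is inherited by the corollary with no extra argument.

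The step needing the most care is the first one: the equivalence asserted in Proposition~1 is stated for the nominal, fault-free system, so I must check that introducing the additive output fault does not disturb it. Because $f(t)$ enters at the output in the same place as the disturbance term $v(t)$, it is filtered by the same $H^{-1}(q)$ in both the state-space and the transfer-function formulations; therefore the fault-to-residual map is unchanged and no genuine obstacle arises. The remainder is the routine substitution described above.
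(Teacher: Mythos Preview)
Your proposal is correct and follows exactly the route the paper itself indicates: the paper states that the corollary is ``straightforward'' from Proposition~1 (identifying the KF-based residual with the prediction error) together with the preceding theorem comparing $\mathcal{J}_{\mathrm{OE}}$ and $\mathcal{J}_{\mathrm{PE}}$. Your extra care in verifying that the additive output fault passes through $H^{-1}(q)$ identically in both formulations is a nice elaboration, but no additional idea beyond the paper's two-line justification is required.
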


In Chapter 7.9 and 12.4.3 of \cite{ding2008model}, the author proved that KF-based residual solves the $\mathcal{H}_i/\mathcal{H}_{\infty}$ optimization problem concerning fault to residual and disturbance to residual transfer matrices. Here the optimization object is changed so KF-based residual is no longer optimal. The main differences of $\mathcal{J}$ to the $\mathcal{H}_i/\mathcal{H}_{\infty}$ used in \cite{ding2008model} are that $\mathcal{J}$ is $\mathcal{H}_2$ related and involves the fault information.

For the general residual $r(t)$, a post-filter $Q(q)$ can be designed to achieve the maximum $\mathcal{J}$. If the performance index of $Q(q)r(t)$ reaches the maximum, $Q(q)$ is called optimal filter for $r(t)$. The following result gives the solution of the general residual:
 \begin{thm}
   Suppose that the order of $Q$ ($n_Q$) is allowed to tend to $\infty$, the optimal filter is a frequency selector satisfying
   \begin{equation}
     \int_{-\pi}^{\pi}|Q(\mathrm{e}^{\mathrm{j}\omega})|^{2}\varPhi_{z}(\omega)\mathrm{d}\omega =
     2\pi|Q(\mathrm{e}^{\mathrm{j}\omega_{0}})|^{2}\varPhi_{z}(\omega_{0}),
     \label{eq:Qopt}
   \end{equation}
   where
   \begin{equation}
     \omega_{0}: = \mathop{\arg\max}_{w} {\varPhi_{\bar{f}}(\omega)}/{\varPhi_{\bar{d}}(\omega)}
     \label{eq:wopt}
   \end{equation}
   and $\varPhi_{z}$ is the spectrum of an arbitrary quasi-stationary stochastic process $\{z(t)\}$. The maximum value of $J$ is
   \begin{equation}
     J_{\text{opt}} = {\varPhi_{\bar{f}}(\omega_{0})}/{\varPhi_{\bar{d}}(\omega_{0})}+1.
     \label{eq:wopt2}
   \end{equation} 
\end{thm}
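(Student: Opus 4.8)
\emph{Proof sketch.}
The plan is to reduce the maximization of $\mathcal{J}$ over post-filters $Q$ to a pointwise optimization of the frequency-wise fault-to-noise ratio $\rho(\omega):=\varPhi_{\bar f}(\omega)/\varPhi_{\bar d}(\omega)$, and then to show that the resulting bound is approached as $n_Q\to\infty$ by letting $|Q|^{2}$ concentrate around the maximizing frequency $\omega_{0}$.

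First I would rewrite the index for the post-filtered residual $Q(q)r(t)=Q(q)\bar f(t)+Q(q)\bar d(t)$. Using the standard spectral relation $\varPhi_{Qx}(\omega)=|Q(\ejw)|^{2}\varPhi_{x}(\omega)$ (Chapter 2.3 of \cite{ljung1999system}) together with the frequency-domain expression (\ref{eq:J_freq}), this gives
\[
\mathcal{J}(Q)-1=\frac{\intpi |Q(\ejw)|^{2}\varPhi_{\bar f}(\omega)\dw}{\intpi |Q(\ejw)|^{2}\varPhi_{\bar d}(\omega)\dw}.
\]
Introducing the nonnegative weight $W(\omega):=|Q(\ejw)|^{2}\varPhi_{\bar d}(\omega)$ (here $\varPhi_{\bar d}>0$, so $\rho$ is well defined, for the innovations-type disturbances of PE and OE; in general only $\varPhi_{\bar d}(\omega_{0})>0$ is needed), the right-hand side becomes a $W$-weighted average of $\rho$,
\[
\mathcal{J}(Q)-1=\frac{\intpi \rho(\omega)\,W(\omega)\dw}{\intpi W(\omega)\dw}\le \sup_{\omega}\rho(\omega)=\rho(\omega_{0}),
\]
so $\mathcal{J}(Q)\le J_{\text{opt}}$ for every admissible $Q$; this is the easy half.

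Next I would establish achievability and identify the limiting filter. I would construct a sequence of stable rational filters $\{Q_{n}\}$ with $\deg Q_{n}=n\to\infty$ whose squared magnitude responses $|Q_{n}(\ejw)|^{2}$, after normalization, form an approximate identity concentrated at $\pm\omega_{0}$ — for instance narrow band-pass filters centered at $\pm\omega_{0}$ (using the symmetric pair keeps $Q_{n}$ real, which is harmless since all spectra here are even). Continuity of $\varPhi_{\bar f}$ and $\varPhi_{\bar d}$ at $\omega_{0}$ and positivity of $\varPhi_{\bar d}(\omega_{0})$ then force the weighted average to converge to $\rho(\omega_{0})$, i.e. $\mathcal{J}(Q_{n})\to J_{\text{opt}}$, which proves (\ref{eq:wopt2}). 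In the idealized limit $|Q(\ejw)|^{2}$ behaves like the Dirac mass $2\pi|Q(\e^{\RM{j}\omega_{0}})|^{2}\,\delta(\omega-\omega_{0})$; testing this against the spectrum $\varPhi_{z}$ of an arbitrary quasi-stationary $\{z(t)\}$ reproduces exactly (\ref{eq:Qopt}), the frequency-selector property. If the supremum of $\rho$ is not attained or $\omega_{0}$ is non-unique, the same construction works with an $\epsilon$-maximizer $\omega_{\epsilon}$ followed by $\epsilon\downarrow 0$.

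The hard part will be the achievability/limit step rather than the bound: one must exhibit \emph{genuinely rational} filters of finite but growing order whose magnitude-squared responses converge to a point mass (treating boundary frequencies $\omega_{0}\in\{0,\pi\}$ separately), and then argue that the ``optimal filter'' is this limiting selector rather than any finite-order $Q$, since no finite-order rational $Q$ attains $J_{\text{opt}}$ exactly. Secondary technical points are the well-posedness of the ratio (ruling out $\intpi W(\omega)\dw=0$, which fails only for $Q\equiv 0$) and the justification, via quasi-stationarity, of the interchanges of limits implicit in (\ref{eq:J_time})--(\ref{eq:J_freq}).
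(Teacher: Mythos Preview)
Your sketch is correct and follows the natural line: rewrite $\mathcal{J}(Q)-1$ as a $|Q|^{2}\varPhi_{\bar d}$-weighted average of the pointwise ratio $\rho(\omega)=\varPhi_{\bar f}(\omega)/\varPhi_{\bar d}(\omega)$, bound by $\sup_{\omega}\rho(\omega)$, and attain the bound by letting $|Q_{n}|^{2}$ concentrate at $\omega_{0}$ as $n_{Q}\to\infty$. The paper itself does not give a proof here --- it states that the result is a straightforward generalization of Theorem~2 in \cite{zhou2021identification} (where $r(t)$ was restricted to OE) and omits the argument --- so there is no in-paper proof to compare against; your weighted-average/approximate-identity argument is exactly the standard route one would expect that omitted proof to take.
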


The theorem is a generalization of Theorem~2 in \cite{zhou2021identification} where $r(t)$ is confined to OE only. The proof is similar hence it is omitted here. The interpretation of this theorem is that the best detection performance concerning one specific fault is achieved, when the frequency leading the largest FNR is selected.  
\begin{rem}
   When no model error exists, the condition that KF-based residual, or prediction error leads to the best performance is when $\varPhi_{v}\equiv \varPhi_f$ at every frequency, which can hardly be met in practice. In that case, the optimal filters is an all-pass filter and OE, PE, KF-based residual are all optimal. If it is not the case, one can always magnify the effect of fault using the optimal filter.
\end{rem}
\begin{rem}
   Notice that PE is obtained by filtering OE with $H^{-1}(q)$. This fact reveals that ideally (when $n_Q\rightarrow \infty$), letting $r(t)$ be OE or PE leads to the same optimal filters and same values of $\mathcal{J}$, because $H^{-1}(\ejw)$ is cancelled in (\ref{eq:wopt}) and (\ref{eq:wopt2}).
\end{rem}

According to Theorem~3 and Remark~2, there is no difference choosing whether OE or PE as the basic residual then using an optimal filter. This is true for the ideal case, but in practice one must use finite-order low-pass or band-pass filters as an approximation of the optimal frequency selector, in which case choice of basic residual $r(t)$ do affects the detection performance. Consider again the situation in Assumption~1, where the ideal optimal filter should select only the zero frequency. In practice, low-pass filters are used as approximations. Then the following result holds:
\begin{prop}
   Suppose that an ideal low-pass filter $Q(q)$ whose passband is $\left[0,\bar{\varOmega}\right]$ is used to both OE and PE. Denote their new performance indices as $\mathcal{J}_{\RM{oef}}$ and $\mathcal{J}_{\RM{pef}}$, then under Assumption~1, it holds that:
   \begin{equation}
      \mathcal{J}_{\RM{oef}} > \mathcal{J}_{\RM{pef}}.
   \end{equation}
\end{prop}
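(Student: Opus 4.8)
The plan is to prove the claim as the passband-restricted analogue of Theorem~1: the ideal low-pass filter $Q$ merely restricts every spectral integral in~(\ref{eq:J_freq}) to the passband $P:=[-\bar\varOmega,\bar\varOmega]$ (since $|Q(\ejw)|^{2}$ equals $1$ on $P$ and $0$ off it), so the correlation estimate that yields Theorem~1 over $[-\pi,\pi]$ can be reused over $P$. Concretely, for the OE residual~(\ref{eq:oe}) one has $\bar f=f$ and $\bar d=\zeta=v$, whence $\mathcal{J}_{\RM{oef}}=1+\big(\int_{P}\varPhi_f\,\d\omega\big)/\big(\int_{P}\varPhi_v\,\d\omega\big)$. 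For the PE residual, by~(\ref{eq:pe}) and with $v=H(q)e$ ($e$ the white innovations), one has $\bar f=H^{-1}(q)f$ and $\bar d=e=H^{-1}(q)v$, so $\varPhi_{\bar f}=|H^{-1}(\ejw)|^{2}\varPhi_f=:g(\omega)\varPhi_f(\omega)$ and $\varPhi_{\bar d}=|H^{-1}(\ejw)|^{2}\varPhi_v=g(\omega)\varPhi_v(\omega)$, where $g:=|H^{-1}(\ejw)|^{2}$ is proportional to $1/\varPhi_v$ because $e$ is white; hence $\mathcal{J}_{\RM{pef}}=1+\big(\int_{P}g\varPhi_f\,\d\omega\big)/\big(\int_{P}g\varPhi_v\,\d\omega\big)$. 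The proposition thus reduces to the strict inequality $\big(\int_{P}\varPhi_f\big)\big(\int_{P}g\varPhi_v\big)>\big(\int_{P}g\varPhi_f\big)\big(\int_{P}\varPhi_v\big)$.

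For this I would use the identity obtained by writing $\varPhi_f=h\varPhi_v$ with $h:=\varPhi_f/\varPhi_v$, turning the products of one-dimensional integrals into integrals over $P\times P$, and symmetrizing in the two frequencies:
\[
\mathcal{J}_{\RM{oef}}-\mathcal{J}_{\RM{pef}}=\frac{\iint_{P\times P}\varPhi_v(\omega)\varPhi_v(\omega')\big[g(\omega')-g(\omega)\big]\big[h(\omega)-h(\omega')\big]\,\d\omega\,\d\omega'}{2\big(\int_{P}\varPhi_v\,\d\omega\big)\big(\int_{P}g\varPhi_v\,\d\omega\big)}.
\]
The denominator is strictly positive, so it suffices to show the double integral is positive, and for that it is enough that the integrand be everywhere $\ge 0$ and strictly positive on a set of positive measure. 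This is exactly where Assumption~1 enters: $v$ being low-pass makes $\varPhi_v$ nonincreasing in $|\omega|$, hence $g\propto 1/\varPhi_v$ nondecreasing in $|\omega|$; and $\varPhi_f$ decaying faster than $\varPhi_v$ makes the ratio $h=\varPhi_f/\varPhi_v$ nonincreasing in $|\omega|$. Therefore, for any $\omega,\omega'$, the factors $g(\omega')-g(\omega)$ and $h(\omega)-h(\omega')$ have the same sign (both $\ge 0$ when $|\omega'|\ge|\omega|$ and both $\le 0$ otherwise), so the integrand is $\ge 0$; this already yields $\mathcal{J}_{\RM{oef}}\ge\mathcal{J}_{\RM{pef}}$. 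Strictness follows because under Assumption~1 both $g$ and $h$ are genuinely nonconstant on $P$ — the complement of the degenerate case $\varPhi_v\equiv\varPhi_f$ of Remark~1 — so the integrand is strictly positive on a subset of $P\times P$ of positive measure.

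The step I expect to be the real obstacle is not this correlation estimate, which is routine, but converting the qualitative wording of Assumption~1 into the two precise monotonicity facts it is used for. One needs "$v$ low-pass" to mean that $\varPhi_v$ is actually (weakly) monotonically decreasing on $[0,\bar\varOmega]$ — which is what makes $|H(\ejw)|^{2}\propto\varPhi_v$, and hence the gain of $H^{-1}$, monotone — rather than merely "mostly concentrated at low frequencies"; and one needs "$\varPhi_f$ decays faster than $\varPhi_v$" to mean that $\varPhi_f/\varPhi_v$ is monotonically decreasing, not just that $\varPhi_f$ has lighter tails. A closely related subtlety is guaranteeing strictness: one must secure a positive-measure part of the passband on which $h$ strictly decreases while $g$ strictly increases, which is precisely where the "faster decay" half of Assumption~1 (beyond both signals being low-pass) is indispensable, and which is why, as Remark~1 records, the inequality degenerates to an equality when $\varPhi_v\equiv\varPhi_f$.
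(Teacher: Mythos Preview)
Your reduction is exactly the paper's: use the ideal low-pass filter to restrict the spectral integrals in~(\ref{eq:J_freq}) to the passband, and then compare
\[
\frac{\int_{0}^{\bar\varOmega}\varPhi_f\,\d\omega}{\int_{0}^{\bar\varOmega}\varPhi_v\,\d\omega}
\quad\text{versus}\quad
\frac{\int_{0}^{\bar\varOmega}|H(\ejw)|^{-2}\varPhi_f\,\d\omega}{\int_{0}^{\bar\varOmega}|H(\ejw)|^{-2}\varPhi_v\,\d\omega}.
\]
Where the two arguments diverge is at the key inequality. The paper simply asserts it, offering the heuristic that since $f$ is ``more low-pass'' than $v$, passing both through the high-pass filter $H^{-1}(q)$ causes a larger relative energy loss for $f$. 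You instead prove it via the Chebyshev-type symmetrization identity, writing the difference as the double integral of $\varPhi_v(\omega)\varPhi_v(\omega')\big[g(\omega')-g(\omega)\big]\big[h(\omega)-h(\omega')\big]$ and reading off the sign from the opposite monotonicities of $g=|H^{-1}(\ejw)|^{2}\propto 1/\varPhi_v$ and $h=\varPhi_f/\varPhi_v$. That is a genuine addition: it makes explicit the precise monotonicity hypotheses one must extract from the qualitative Assumption~1 (namely that $\varPhi_v$ is nonincreasing on the passband and $\varPhi_f/\varPhi_v$ strictly decreasing somewhere), and it cleanly isolates why the strict inequality can degenerate to equality only in the situation of Remark~1. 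The paper's informal ``relative energy loss'' sentence is exactly the intuition behind your correlation estimate, so the two proofs are the same in spirit; yours is the rigorous execution.
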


\begin{proof}
   First note that 
   \begin{equation}
      \frac{\int_{-\pi}^{\pi}\vert Q(\ejw)\vert^2 \varPhi^{\RM{f}}_{r}(\omega)\dw}{\int_{-\pi}^{\pi}\vert Q(\ejw)\vert^2\varPhi^{\RM{n}}_{r}(\omega)\dw}
      = \frac{\int_{0}^{\bar{\varOmega}}\varPhi^{\RM{f}}_{r}(\omega)\dw}{\int_{0}^{\bar{\varOmega}}\varPhi^{\RM{n}}_{r}(\omega)\dw},
   \end{equation}
   hence to compare $\mathcal{J}_{\RM{oef}}$ and $\mathcal{J}_{\RM{pef}}$, it is sufficient to compare:
   \begin{equation}
      \frac{\int_{0}^{\bar{\varOmega}}\varPhi_{f}(\omega)\dw}{\int_{0}^{\bar{\varOmega}}\varPhi_{v}(\omega)\dw}\text{ and }\frac{\int_{0}^{\bar{\varOmega}}\vert H(\ejw)\vert^{-2}\varPhi_{f}(\omega)\dw}{\int_{0}^{\bar{\varOmega}}\vert H(\ejw)\vert^{-2}\varPhi_{v}(\omega)\dw}.
   \end{equation}
   The following inequality can be verified:
   \begin{equation}
      \frac{\int_{0}^{\bar{\varOmega}}\vert H(\ejw)\vert^{-2}\varPhi_{v}(\omega)\dw}{\int_{0}^{\bar{\varOmega}}\varPhi_{v}(\omega)\dw}>\frac{\int_{0}^{\bar{\varOmega}}\vert H(\ejw)\vert^{-2}\varPhi_{f}(\omega)\dw}{\int_{0}^{\bar{\varOmega}}\varPhi_{f}(\omega)\dw}
   \end{equation}
   because at frequency interval $\left[0,\bar{\varOmega}\right]$, $f(t)$ is more low-pass than $v(t)$, after being filtered by the high-pass filter $H^{-1}(q)$ its relative energy loss is larger. Thus
   \begin{equation}
      \frac{\int_{0}^{\bar{\varOmega}}\varPhi_{f}(\omega)\dw}{\int_{0}^{\bar{\varOmega}}\varPhi_{v}(\omega)\dw}>\frac{\int_{0}^{\bar{\varOmega}}\vert H(\ejw)\vert^{-2}\varPhi_{f}(\omega)\dw}{\int_{0}^{\bar{\varOmega}}\vert H(\ejw)\vert^{-2}\varPhi_{v}(\omega)\dw},
   \end{equation}
   the proposition has been proved.
\end{proof}

In practice, the bandwidth of $Q$ must be carefully designed considering both the realization issue and the detection speed. The narrower the bandwidth, the closer to the optimal performance index, but at the same time, a lower detection speed \citep{zhou2021identification}. It should be remarked that, although Proposition~4 points out a situation that choosing PE or KF-based residual as $r(t)$ is worse than OE, there may exist situations where they are better.

\subsection{Discussions on statistical test}
In the stochastic setup, the residual evaluation and threshold setting can altogether be called statistical test. A model-based fault detection is typically based on the following decision logic:
\begin{equation}
   J\left(r(t)\right) < J_{\RM{th}} ~\Rightarrow ~ \text{fault free, otherwise faulty.}
   \label{eq:des_rule}
\end{equation}
$J\left(r(t)\right)$ is the evaluation function of $r(t)$ while $J_{\RM{th}}$ is the threshold. Based on $J$ and $J_{\RM{th}}$, the three performance indicators mentioned in Section~\ref{sec:intro} can be defined:
\begin{equation}
   \begin{aligned}
   &\RM{FDR} = \Pr \left(J_{T^{2}} < J_{T^2,\RM{th}}|{f}\neq 0\right), \\
   &\RM{FAR} = \Pr \left(J_{T^{2}} < J_{T^2,\RM{th}}|{f}= 0\right), \\
   &\RM{MT2D}= \Sigma_{k=0}^{k_{\RM{stop}}}k\cdot\Pr(k|{f}\neq 0)
  \end{aligned}
  \label{eq:index}
\end{equation}
where $\Pr(k|{f}\neq 0)$ is the probability that the FD system alarms for the first time at $k\in\left[0,k_{\RM{stop}}\right]$ when faults exist.

The KF-based residual is white and normal. Other aforementioned residuals are typically colored (correlated), making the statistical test difficult. According to Neyman-Pearson lemma and Generalized Likelihood Test (GLR), the following evaluation function is popular for KF-based residual:
\begin{equation}
   J_{\RM{KF}}(t):=\sum_{k = 0}^{s}\varepsilon^{\top}(t-k)\varLambda ^{-1}\varepsilon(t-k),
   \label{eq:J_KF}
\end{equation} 
with the threshold
\begin{equation}
   \Pr \left(\chi^2((s+1)\RM{dim}\varepsilon)>J_{\RM{KF,th}}\right)=\alpha
\end{equation}
where $\chi^2$ denotes $\chi^2$ distribution. Notice that $s$ is an adjustable parameter. Using the above setting, $\RM{FAR}=\alpha$ and given a FAR the FDR can be maximized. However this optimality only holds specifically for statistical tests of KF-based residual. In a real FD task the detection performance also counts. If FNR in a residual is poor, the detection will fail despite the statistical optimality.

In \cite{ding2008model,ding2021advanced}, an integrated design problem is considered which aims at developing an optimal post-filter to maximize FDR given a FAR. The author defines FAR and FDR of deterministic system with bounded disturbance, then uses these concepts to prove that the unified solution is the optimal solution of the integrated design problem. Later the author claims KF-based residual is also optimal for stochastic system using the fact that both two residuals have flat spectra. Notice that in above procedure, the optimality of KF-residual concerning (\ref{eq:index}) is actually not proved. 

It is worthy to discuss the well-known $T^2$ statistics, which is used as evaluation function of OE and its filtered version in \cite{zhou2021identification}. The $T^2$ statistic of a residual is defined as:
\begin{equation}
   J_{T^{2}}\left({r}(t)\right) = [{r}(t)-\mu]^{\top} {S}^{-1} [{r}(t)-\mu],
   \label{eq:T2}
\end{equation}
where $\mu$ and $S$ denote the mean and covariance estimated from fault-free residual $r(t)$. The corresponding threshold is
\begin{equation}
   \Pr \left(\frac{p(N^{2}-1)}{N(N-p)}F(p,N-p)>J_{T^2\RM{,th}}\right)=\alpha.
   \label{eq:threshold}
 \end{equation} 
where $N$ is the sample length of fault-free data, $p=\RM{dim}(r)$, and $F$ denotes $F$ distribution. For KF-based residual and PE, $T^2$ statistic serves as an alternative when the residual covariance is not known exactly. For large $N$, a FAR nearly equaling $\alpha$ and an optimal statistical test can still be obtained. But for those colored residuals, e.g. OE, filtered OE and filtered PE, there will be bias between FAR and $\alpha$, and the statistical test is not optimal. 

\begin{table*}[!t]\scriptsize
    \setlength{\abovecaptionskip}{-0.5cm}
    \setlength{\belowcaptionskip}{-0cm}
	\caption{Performance indicators of the Monte Carlo simulations in Case 1: step fault.}
	\centering
   \begin{tabularx}{0.7\textwidth}{Xlllllll}
      \toprule
            & $J_{\RM{KF}}$, $s=10$  &  $J_{\RM{KF}}$, $s=100$ & $J_{\RM{KF}}$, $s=2000$ & $J_{T^{2}}$, OE    & $J_{T^{2}}$, PE    & $J_{T^{2}}$, OEF   & $J_{T^{2}}$, PEF \\
      \midrule
      FDR (\%) & 2.1470  & 6.4026  & 75.0384  & 79.6481  & 1.4282  & \textcolor[rgb]{ .753,  0,  0}{91.9657 } & 81.4036  \\
      FAR (\%) & 0.9983  & 0.9846  & 0.9646  & 0.8982  & 0.9994  & 1.1045  & 1.0004  \\
      MT2D (sample) & \textcolor[rgb]{ .682,  .667,  .667}{1.0000 } & \textcolor[rgb]{ .682,  .667,  .667}{1.0000 } & \textcolor[rgb]{ .753,  0,  0}{1.0000 } & 12.8102  & \textcolor[rgb]{ .682,  .667,  .667}{1.0000 } & 158.5254  & 112.1740  \\
      \bottomrule
      \end{tabularx}%
	\label{tab:step}
\end{table*}

\begin{table*}[!t]\scriptsize
    \setlength{\abovecaptionskip}{-0.1cm}
    \setlength{\belowcaptionskip}{-0cm}
	\caption{Performance indicators of the Monte Carlo simulations in Case 2: drift fault.}
	\centering
   \begin{tabularx}{0.7\textwidth}{Xlllllll}
      \toprule
      & $J_{\RM{KF}}$, $s=10$  &  $J_{\RM{KF}}$, $s=100$ & $J_{\RM{KF}}$, $s=2000$ & $J_{T^{2}}$, OE    & $J_{T^{2}}$, PE    & $J_{T^{2}}$, OEF   & $J_{T^{2}}$, PEF \\
      \midrule
      FDR (\%) & 2.5153  & 10.3384  & 36.6471  & 61.4515  & 1.4568  & \textcolor[rgb]{ .753,  0,  0}{70.8666 } & 70.4854  \\
      FAR (\%) & 0.9989  & 1.0092  & 0.9603  & 1.0177  & 1.0036  & 1.0727  & 0.9793  \\
      MT2D (sample) & \textcolor[rgb]{ .682,  .667,  .667}{375.0896 } & \textcolor[rgb]{ .682,  .667,  .667}{1339.2708 } & 2878.1704  & \textcolor[rgb]{ .753,  0,  0}{846.7202 } & \textcolor[rgb]{ .682,  .667,  .667}{100.5828 } & 1183.4510  & 1159.8462  \\
      \bottomrule
      \end{tabularx}%
	\label{tab:drift}
\end{table*}

\begin{table*}[!t]\scriptsize
    \setlength{\abovecaptionskip}{-0.1cm}
    \setlength{\belowcaptionskip}{-0cm}
	\caption{Performance indicators of the Monte Carlo simulations in Case 3: sine fault.}
	\centering
   \begin{tabularx}{0.7\textwidth}{Xlllllll}
      \toprule
      & $J_{\RM{KF}}$, $s=10$  &  $J_{\RM{KF}}$, $s=100$ & $J_{\RM{KF}}$, $s=2000$ & $J_{T^{2}}$, OE    & $J_{T^{2}}$, PE    & $J_{T^{2}}$, OEF   & $J_{T^{2}}$, PEF \\
      \midrule
      FDR (\%) & 3.5908  & 16.8097  & \textcolor[rgb]{ .753,  0,  0}{85.1773 } & 1.0163  & 1.8070  & 78.5751  & 78.5780  \\
      FAR (\%) & 1.0010  & 1.0182  & 0.9468  & 1.0491  & 1.0070  & 1.1193  & 1.0641  \\
      MT2D (sample) & \textcolor[rgb]{ .682,  .667,  .667}{95.7858 } & \textcolor[rgb]{ .682,  .667,  .667}{203.7510 } & 691.9424  & \textcolor[rgb]{ .682,  .667,  .667}{1662.3111}  & \textcolor[rgb]{ .682,  .667,  .667}{37.3756 } & \textcolor[rgb]{ .753,  0,  0}{271.2402 } & 272.0680  \\
      \bottomrule
      \end{tabularx}%
	\label{tab:sine}
\end{table*}

\section{Simulation study}
\label{sec:sim}
\subsection{System description}
Consider a SISO system in the form of (\ref{eq:sys_ss}). The system matrices are given as follows:
\begin{equation}
   \begin{aligned}
      &A=\left(\begin{array}{ccc}
         0 & 1\\
         -0.9063 & 1.905 \\
    \end{array}\right),~
      B = \left(\begin{array}{ccc}
         1\\
         1 \\
    \end{array}\right),~
    C=\left(1~0\right),\\
    &\varSigma_w=0.01I_2,~ \varSigma_{\nu}=0.01,~ \varSigma_{w\nu}=0.
   \end{aligned}
\end{equation}
Under the assumption that no model error exists, the input has no effect on the residual hence it is set zero for simplicity.

\subsection{Monte Carlo simulation}
In this section, different residuals and evaluation functions will be tested, including: KF-based residual evaluated by $J_{\RM{KF}}$ (\ref{eq:J_KF}) with different $s$; OE, PE and their filtered version evaluated by $J_{T^{2}}$ (\ref{eq:T2}). 5000 simulations will be run and the FDRs, FARs and MT2Ds will be compared. In each simulation, $N=10000$ samples are generated for $T^2$ threshold setting, and $N_{\RM{FD}}=10000+s$ samples are generated to test the FD performance, in which faults are assumed to happen at $N_f=5000+s$. In calculation of the evaluation functions, $J_{T^{2}}$s use the samples from $s+1$ to end while $J_{\RM{KF}}$s use all samples. In this way all methods utilize equal amount of data fo FD. Concerning the threshold setting, for those white residual $\alpha$ is set to $0.99$ and for those colored (OE, filtered OE and filtered PE) $\alpha$ is set to $0.993$. This makes FARs of all the methods nearly the same, which guarantees fair comparisons of FDRs.

Notice that due to large $N$ (normal data length), evaluating PE by $J_{T^{2}}$ is equivalent to evaluating it by $J_{\RM{KF}}$ with $s=0$. In the tables given below, the method that gives highest FDR is marked red; the MT2Ds of the methods that give FDR $<20\%$ make little sense so they are marked grey, the lowest MT2D apart from the grey ones is marked red.  

\begin{figure*}[!t]
    \vspace{-25pt}
    \setlength{\abovecaptionskip}{-0.2cm}
	\setlength{\belowcaptionskip}{-0.4cm}
   \begin{center}
   {\includegraphics[width=0.7\textwidth]{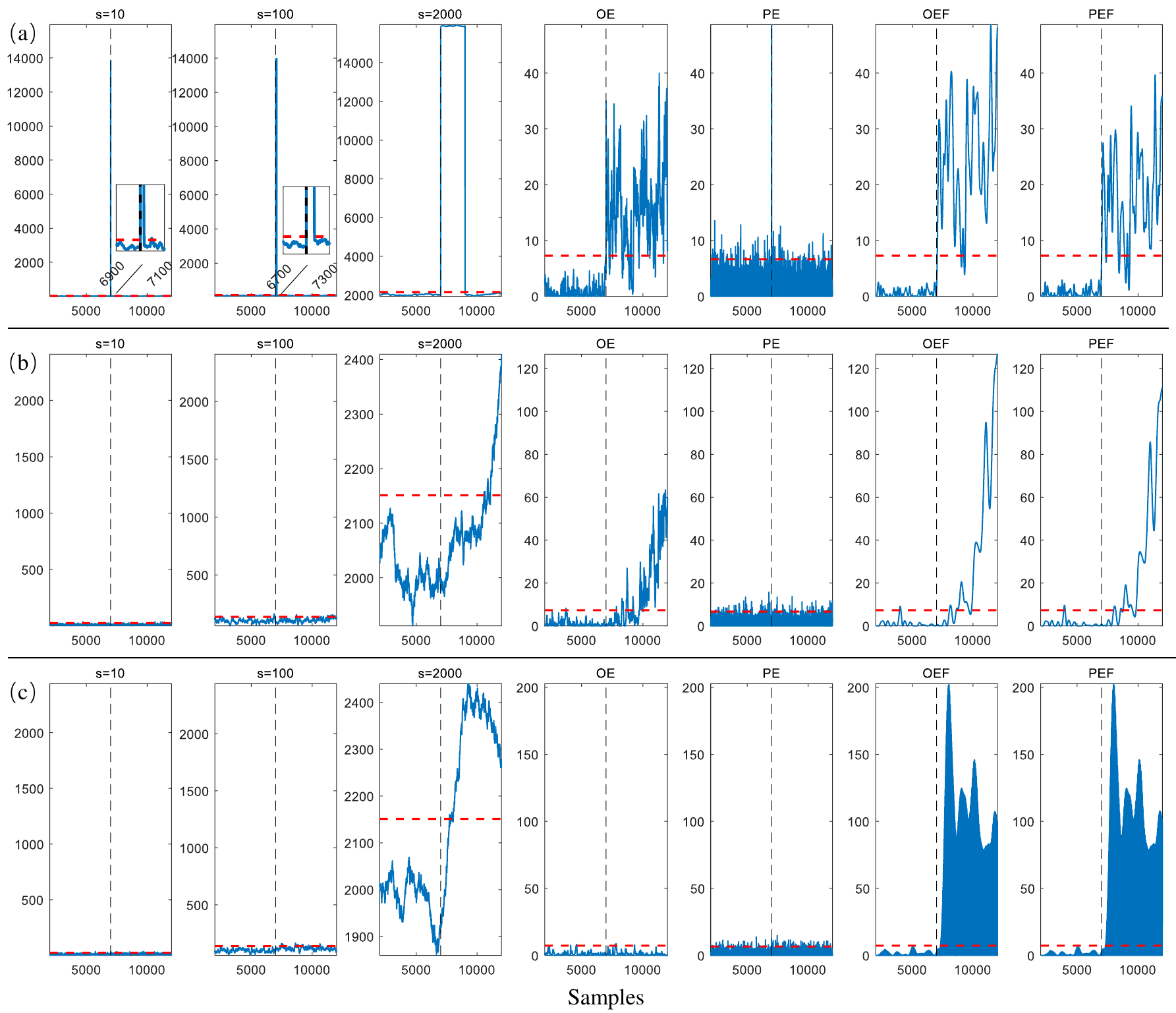}}
   \end{center}
   \caption{Different methods in one realization. (a) Case 1: step fault. (b) Case 2: drift fault. (c) Case 3: sine fault.}
   \label{fig:comp}
\end{figure*}

\textbf{Case~1: step fault}

 In this case, the fault is a step signal with amplitude 30; and a Butterworth low-pass filter with 0.02 rad/sample cutoff frequency is used as the optimal filter. The results are shown in Table~\ref{tab:step} and one randomly chosen realization is shown in Fig.~\ref{fig:comp}(a). The results imply that OE+filter (OEF) leads to the best performance while PE and $J_{\RM{KF}}$ with small $s$ deliver very poor performance. All methods based on pure KF-based residual or PE result in a large peak after fault happens, but decrease sharply afterwards (the peak width depends on $s$). The large peak leads the smallest MT2D but this is undesirable as the evaluation function varies inconsistently with the actual fault. After combing PE with optimal filter (PEF), FDR is greatly enhanced but still worse than OEF, which confirms the declaration in Proposition~4.

 \textbf{Case~2: drift fault}

In this case, fault is a drift signal with amplitude 60. The cutoff frequency of the low-pass filter is set to 0.01 rad/sample. The results are shown in Table~\ref{tab:drift}, and Fig.~\ref{fig:comp}(b). OEF remains best while the gap between OEF and PEF is very small. This is due to the bandwidth of the filter is narrower than the one in Case 1, which is closer to the ideal frequency selector, using which OEF and PEF deliver the same performance (see Remark~2). All methods based on pure KF-based residual or PE result in poor FDR and MT2D. 

\textbf{Case~3: sine fault}

In this case, fault signal is a sine signal with amplitude 0.6 and frequency 0.4. Optimal filter is chosen as a Butterworth band-pass filter with passband $[0.395,0.405]$ rad/sample. The results are shown in Table~\ref{tab:sine}, and Fig.~\ref{fig:comp}(c). The results show that $J_{\RM{KF}}$ with $s=2000$ gives the highest FDR but a large MT2D; meanwhile OEF and PEF give slightly small FDR but much smaller MT2D. 

\textbf{Summary}

Based on the results above, the following summaries can be made: 1) the optimal filter scheme can offer fairly high and stable detection performance, although $T^2$ test is not optimal for such colored residuals; 2) pure KF-residual cannot offer as high performance as optimal filter. By increasing $s$ of $J_{\RM{KF}}$, its FDR can be raised but a large $s$ leads a slow detection speed and poses demanding requirement for the amount of data. 

\begin{table*}[!htbp]\scriptsize
    \vspace{-15pt}
    \setlength{\abovecaptionskip}{-0.3cm}
	\setlength{\belowcaptionskip}{-0cm}
   \centering
   \caption{FDRs of different methods of  fault 3, 9, 15.}
     \begin{tabularx}{0.6\textwidth}{llllllll}
     \toprule
     Fault & SCVA  & SAP   & OE    & OEFB  & KF ($s=$0) & KF ($s=$10) & KF ($s=$20) \\
     \midrule
     3     & 16.3  & 6.4   & 18.6  & \textcolor[rgb]{ .753,  0,  0}{\textbf{93.9 }} & 26.0  & 73.3  & 91.1  \\
     9     & 11.7  & 0.9   & 8.4   & \textcolor[rgb]{ .753,  0,  0}{\textbf{84.4 }} & 44.0  & 72.8  & 74.6  \\
     15    & 25.5  & 29.5  & 21.1  & \textcolor[rgb]{ .753,  0,  0}{\textbf{91.5 }} & 56.3  & 84.4  & 86.9  \\
     \bottomrule
     \end{tabularx}%
   \label{tab:TEP}%
 \end{table*}%
 \begin{figure*}[!htbp]
    \setlength{\abovecaptionskip}{-0.1cm}
	\setlength{\belowcaptionskip}{-0.4cm}
   \begin{center}
   {\includegraphics[width=0.9\textwidth]{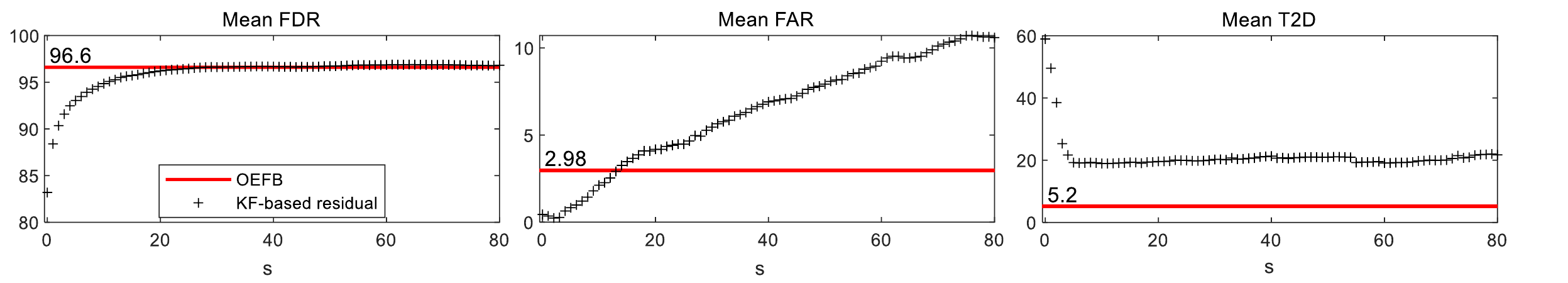}}
   \end{center}
   \caption{Average fault indicators: KF-based methods vs OEFB.}
   \label{fig:TEP}
\end{figure*}

\section{Validation in Tennessee Eastman process}\label{sec:TEP}
TEP is a benchmark process for fault detection and diagnosis. This section will further validate the aforementioned declarations based on the dataset developed in \citep{chiangFaultDetectionDiagnosis2001}. Several model-based fault detection using different identification methods and residuals are compared, including SCVA \citep{luSparseCanonicalVariate2018}, SAP \citep{Ding2009}, OE with $T^2$ statistic, KF-based residual with evaluation function (\ref{eq:J_KF}) and the automatically determined optimal filter scheme proposed in \cite{Zhou2022b}. For KF-based method, different $s$ will be tested. The automatically determined optimal filter scheme uses a filter bank to filter OE, then select the best filter to enhance detection performance, denote it as OEFB. It is an improved version of the optimal filter scheme that does not require fault data while improves all three performance indicators (\ref{eq:index}). 

Concerning the input-output variable selections and model structures used in above methods, please refer the cited papers. SCVA and SAP use subspace identification methods, while others in this paper uses transfer-function prediction method. The KF-based residuals are directly obtained by calculating PE utilizing their equivalence. $H(q)$ in (\ref{eq:pe}) is replaced by its estimate delivered at identification phase. 

The mean FDR, FAR and T2D (time to detection) of 21 documented fault are calculated. Fig.~\ref{fig:TEP} shows the comparison results of them of KF-based methods and OEFB. OEFB is chosen because it gives the best performance in all indicators among state-of-the-art methods \citep{Zhou2022b}. From the figure one can see that as $s$ increases from 0 to 80, the mean FDR is improved, while the mean FAR also increases. Notice that same confidence levels of the thresholds are used. Only when $s>20$, KF-based method gives comparable FDR to OEFB while it gives worse FAR than OEFB. For all tested $s$, mean T2D of KF-based residual is very poor. These observations are same as in Section 4. Further, the FDR of all tested methods of fault 3, 9, 15 are given in Table~\ref{tab:TEP}. These three faults are most difficult to detect \citep{yinComparisonStudyBasic2012}. From the table one can see OEFB that uses optimal filters can give the best FDR while KF-based methods with different $s$ do not.

\section{Conclusion}
\label{sec:con}
This work investigates the optimality of KF-based residual using frequency-domain considerations. Both theoretical analysis and simulation study have been carried out. The conclusion is that Kalman filter is not optimal in general for fault detection. Specifically, the paper reveals: 1) a residual optimal in some index does not necessarily lead a high detection performance. The simulation results show that $\mathcal{J}$ is more relevant to FD performance than $\mathcal{H}_i/\mathcal{H}_{\infty}$. The optimal filter based on $\mathcal{J}$ can offer high FD performance despite the statistical test is not that rigorous; 2) when disturbance and fault are both low-pass, and fault is more low-pass, even with the optimal filter KF-based residual is not better than OE. However there may be other situation in which KF-based residual is better; 3) an optimal statistical test is no doubt desireable in FD, but it does not imply optimality in FD performance; 4) by increasing the window length $s$ in $J_{\RM{KF}}$, the FDR of KF-based residual can be raised. However, a too large $s$ makes little sense in practice because it leads to slow detection speed and requires large data for online calculation.   

There are also occasions Kalman filter being superior. For instance, when working with open-loop unstable system, or developing methods for time-varying system \citep{xu2004residual,zhang2018adaptive,zhong2023fault}. The FD performance of a residual will depend on the system and fault characteristics, as well as the evaluation function and corresponding threshold. There is currently no residual having optimality in all aspects so one must make careful tradeoffs in practice.

\bibliography{mybibfile}

\end{document}